\newif\iflong
\def\BibTeX{{\rm B\kern-.05em{\sc i\kern-.025em b}\kern-.08em
    T\kern-.1667em\lower.7ex\hbox{E}\kern-.125emX}}
\begin{document}

\newtheorem{theorem}{Theorem}[section]
\newtheorem{conjecture}[theorem]{Conjecture}
\newtheorem{proposition}[theorem]{Proposition}
\newtheorem{lemma}[theorem]{Lemma}
\newtheorem{corollary}[theorem]{Corollary}
\newtheorem{example}[theorem]{Example}
\newtheorem{definition}[theorem]{Definition}

\title{Asymmetric Differential Privacy\thanks{Proofs are omitted due to space limitations, and we refer to the full version for them https://arxiv.org/abs/2103.00996.}}

\author{\IEEEauthorblockN{Shun Takagi}
\IEEEauthorblockA{\textit{Kyoto University} \\
Japan\\
takagi.shun.45a@st.kyoto-u.ac.jp}
\and
\IEEEauthorblockN{Fumiyuki Kato}
\IEEEauthorblockA{\textit{Kyoto University} \\
Japan \\
kato.fumiyuki.68z@st.kyoto-u.ac.jp}
\and
\IEEEauthorblockN{ Yang Cao}
\IEEEauthorblockA{\textit{Kyoto University}\\
Japan \\
yang@i.kyoto-u.ac.jp}
\and
\IEEEauthorblockN{Masatoshi Yoshikawa}
\IEEEauthorblockA{\textit{Kyoto University} \\
Japan \\
yoshikawa@i.kyoto-u.ac.jp}
}

\maketitle

\begin{abstract}
Differential privacy (DP) is getting attention as a privacy definition when publishing statistics of a dataset.
This paper focuses on the limitation that DP inevitably causes two-sided error, which is not desirable for epidemic analysis such as “how many COVID-19 infected individuals visited location A”.
For example, consider publishing misinformation that many infected people did not visit location A, which may lead to miss decision-making that expands the epidemic.
To fix this issue, we propose a relaxation of DP, called asymmetric differential privacy (ADP).
We show that ADP can provide reasonable privacy protection while achieving one-sided error.
Finally, we conduct experiments to evaluate the utility of proposed mechanisms for epidemic analysis using a real-world dataset, which shows the practicality of our mechanisms.
\end{abstract}

\begin{IEEEkeywords}
Differential Privacy, One-sided Error, Location Privacy
\end{IEEEkeywords}

\section{Introduction}
\label{sec-intro}
Differential privacy~(DP)~\cite{dwork2006differential} is becoming a gold standard privacy notion.
The US Census adopted DP when publishing the 2020 Census results~\cite{bureaumap}, and IT companies such as Google~\cite{erlingsson2014rappor}, Apple~\cite{apple}, Microsoft~\cite{ding2017collecting}, and Uber~\cite{johnson2018towards} are using DP to protect privacy while collecting data.
This widespread adoption comes from mathematical rigorousness under the assumption that an adversary has any knowledge about an individual's data in a dataset.

However, the rigorousness of DP causes utility limitations.
We focus on a utility limitation: \textit{two-sided error}, which occurs when solving a decision problem (i.e., a query whose answer is binary $\{{\rm True}, {\rm False}\}$) by a randomized algorithm such as Monte Carlo algorithm~\cite{kudelic2016monte}.
To the best of our knowledge, this paper for the first time considers this type of error in DP\@.
\textit{Two-sided} stands for the characteristic that both potential answers True and False may be wrong.
In other words, an algorithm causes false positives and false negatives.
On the other hand, \textit{one-sided} stands for the characteristic that either false positive or false negative occurs.
Without loss of generality, hereafter, we say that error is one-sided if only a false negative occurs.
That is, if the error is one-sided, there is a guarantee of accuracy for the answer of True.

For an example of an issue of two-sided error, we consider publishing statistics of trajectories of infected people as measures to control epidemic diseases such as COVID-19~\cite{park2020information} as shown in Figure~\ref{fig:motivative_example}.
Concretely, we query whether a location is safe or not with respect to epidemic disease by a threshold query (i.e., a decision problem that queries whether a count is under the threshold or not).
Although one may use DP to answer the query with privacy protection, answering a decision problem by any DP mechanism always causes \textit{two-sided} error.
This means that even if published information says safe (dangerous) for a target location, the location may not be safe (not be dangerous) due to the noise.
This published information may lead the expansion of the epidemic.
Therefore, published information with two-sided error may not be appropriate for high-risk scenarios such as epidemic analysis.

\begin{figure}[t]
  \centering\includegraphics[width=\hsize]{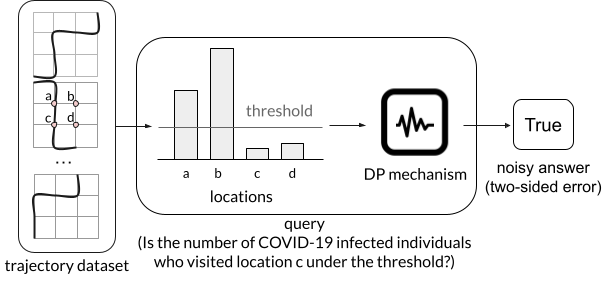}
  \caption{\label{fig:motivative_example}The example of answering a decision problem by a DP mechanism.
  We consider using a trajectory dataset of infected people to publish statistics to support decision-making during a pandemic.
  First, we derive the histogram by counting trajectories that include each target location (here, a, b, c, d).
  Then, we mark the location as “safe” if the count of infected individuals who visited that location is less than a given threshold; otherwise, we mark it as “dangerous”.
  However, the differentially-private answers to such a query have two-sided error.}
\end{figure}

More concretely, in Figure~\ref{fig:motivative_example}, we query "is location $c$ under the threshold?".
The true answer is True, but any $\varepsilon$-DP mechanism outputs False with some probability (the existence of false negative).
Similarly, for the query "is location $b$ under the threshold?", the true answer is False, but any $\varepsilon$-DP mechanism outputs True with some probability (the existence of a false positive).
Therefore, the error is two-sided.
In this example, the false positive involves a high risk with respect to the epidemic disease, but the false negative does not.
Therefore, we attempt to achieve one-sided error with formal privacy protection.

\begin{figure*}[t]
 \begin{minipage}{0.24\hsize}
 \caption*{Ground truth}
  \centering\includegraphics[width=0.9\hsize]{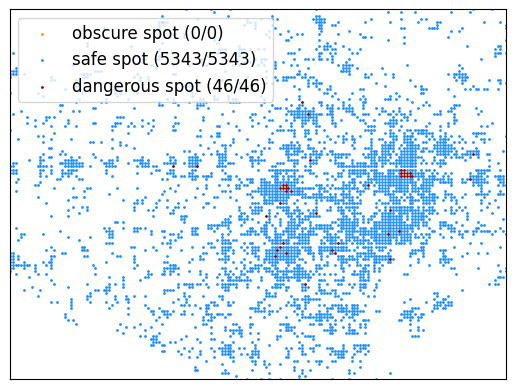}
 \end{minipage}
 \begin{minipage}{0.24\hsize}
 \caption*{$\varepsilon$-DP}
  \centering\includegraphics[width=0.9\hsize]{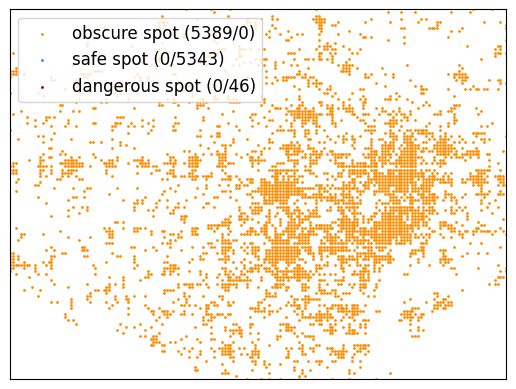}
 \end{minipage}
  \begin{minipage}{0.24\hsize}
  \caption*{$(\varepsilon,\delta)$-DP}
  \centering\includegraphics[width=0.9\hsize]{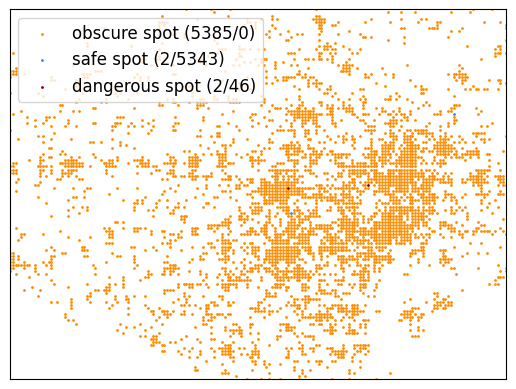}
 \end{minipage}
  \begin{minipage}{0.24\hsize}
  \caption*{ADP}
  \centering\includegraphics[width=0.9\hsize]{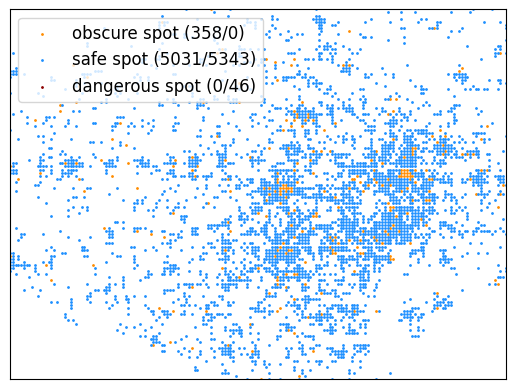}
 \end{minipage}
 \caption{Example of answers of states for $5{,}835$ target locations at ${\rm Dec}/22{\rm th}/2013$ $6$ p.m. from trajectories of $10,338$ users via no privacy (ground truth), $\varepsilon$-DP, $(\varepsilon,\delta)$-DP, and $(\varepsilon,p)$-ADP. 
 The labels $(a/b)$ represent (output number/true number) (e.g., $(\varepsilon,\delta)$-DP outputs only $4$ right answers out of $5{,}343$ safe locations).
 We set $\varepsilon=1$, $\delta=10^{-4}$, and threshold$=3$.\label{fig:result_example}}
\end{figure*}

To this end, we propose Asymmetric Differential Privacy (ADP), which is derived from a necessary condition to achieve one-sided error.
The key to the definition of ADP is a new neighboring relationship, which may seem trivial; however, there is no existing privacy definition that expresses the reasonable privacy protection for the problem of Figure~\ref{fig:motivative_example}, so we need to introduce ADP\@.
Moreover, it is worth noting that our neighboring relationship has interpretability.

Based on the interpretability, we show whether our mechanisms that follow ADP provide reasonable privacy protection for the query "is the target location safe? (i.e., under the threshold)".
We focus on that an answer of Yes or No tends to have biased sensitivity.
For example, No:="one did \textbf{not} visit the target location" is relatively non-sensitive comparing with Yes:="one visited the target location".
From this conception, we show that our mechanisms provide reasonable privacy protection under the assumption that the information "a trajectory does \textbf{not} include the target location" is non-sensitive.
Conversely, we show in a similar vein that to answer the query "is the target location dangerous? (i.e., above the threshold)" with one-sided error, ADP requires the hard assumption that the information "a trajectory includes the target location" is non-sensitive.
Therefore, this paper mainly considers the former query.

Following ADP, we propose mechanisms.
First, we propose a primitive mechanism which is the ADP version of the Laplace mechanism~\cite{dwork2006calibrating}, called the asymmetric Laplace mechanism (ALap).
We show that ALap achieves one-sided error.
However, adapting ALap to our problem causes an issue when the sensitivity of the query is high.
Like the Laplace mechanism~\cite{dwork2006calibrating}, the accuracy linearly decreases as the number of target locations increases.
To solve this problem, we propose the sanitized Laplace mechanism (SALap), the ADP version of the sparse vector technique~\cite{lyu2016understanding}, which satisfies ADP and is robust to the number of target locations.

Finally, we conducted the experiments with the real-world dataset.
The experiment shows that our mechanism is sufficiently practical for location monitoring with one-sided error.
We show the example of a result in Figure~\ref{fig:result_example}.
These represent the safe, dangerous, and obscure states at locations on the Tokyo map at $6$ p.m. published by mechanisms following no privacy (ground truth), $\varepsilon$-DP, $(\varepsilon,\delta)$-DP, and $(\varepsilon,p)$-ADP under the assumption that "a trajectory does \textbf{not} include the target location" is non-sensitive, which is described by $p$.
A safe spot, dangerous spot, and obscure spot respectively represent a location visited at $6$ p.m. by less than $3$, equal to or larger than $3$, and the unclear number of infected people due to the noise.
While $\varepsilon$-DP and $(\varepsilon,\delta)$ marks obscure for almost all spots, ADP can publish many accurate safe spots.
As described above, our mechanism also does not output accurate mark for dangerous spots due to the privacy restriction.

Our contributions are summarized as follows.
\begin{itemize}
    \item We show the issues to achieve one-sided error by $\varepsilon$-DP, $(\varepsilon,\delta)$-DP, Blowfish privacy~\cite{he2014blowfish}, and one-sided differential privacy~\cite{doudalis2020one}.
    \item We propose ADP to solve the issues.
    We show that ADP provides reasonable privacy protection for publishing safe information with one-sided error under the assumption that information "one did \textbf{not} visit a target location" is non-sensitive.
    \item Following ADP, we propose mechanisms, called ALap and SALap, which achieve one-sided error and are robust to the number of target locations.
    \item Our experiments using the real-world dataset show the practicality of our mechanism for location monitoring.
\end{itemize}

\subsection{Related Work}
The most popular relaxation of DP is $(\varepsilon,\delta)$-DP~\cite{dwork2009differential}, and $\delta$ represents the probability that a mechanism breaks $\varepsilon$-DP.
However, we show that large $\delta$ (e.g., $\delta>10^{-1}$) is required to achieve one-sided error with reasonable setting (see Section~\ref{sec:sufficiency} for detail).
Also, some experts warn that $(\varepsilon,\delta)$-DP allows an unacceptable mechanism\footnote{See https://github.com/frankmcsherry/blog/blob/master/posts/2017-02-08.md for details}.
Therefore, $(\varepsilon,\delta)$-DP is not suitable for our problem.

Another way to relax DP is the same as ours: redefining the definition of a neighboring relationship.
Blowfish privacy~\cite{he2014blowfish} and one-sided differential privacy (OSDP)~\cite{doudalis2020one} belong to this family.
However, we also show that their privacy definitions cause unreasonable privacy leaks to achieve one-sided error for the problem of Figure~\ref{fig:motivative_example}.
We describe the detail in Section~\ref{sec:sufficiency}, but here we briefly state the insufficiencies.
\paragraph*{OSDP}
OSDP attempts to utilize non-sensitive records to improve utility~\cite{doudalis2020one}.
Their motivation is completely different from ours, but OSDP implicitly catches \textit{asymmetric} neighboring relationship (i.e., $X$ is neighboring to $X^\prime$, but $X^\prime$ is not neighboring to $X$).
Therefore, OSDP may solve the limitation of two-sided error in specific cases, but the privacy of OSDP is not expressive enough to tightly relax the privacy protection of DP for one-sided error.
Concretely, OSDP only decides whether certain records are protected or not, and OSDP cannot specify which property of a record is protected.
Therefore, OSDP is not sufficiently flexible to stipulate reasonable privacy protection for one-sided error.

\paragraph*{Blowfish privacy}
Blowfish privacy (and Puffefish privacy) also redefines the neighboring relationship~\cite{he2014blowfish,kifer2014pufferfish}.
Blowfish privacy can specify which property of a record is protected, but the definition cannot express the asymmetric neighboring relationship.
Therefore, Blowfish privacy is also not sufficiently flexible, which leads to unexpected privacy leaks.

The classical mechanism called Mangat's randomized response (Mangat's RR)~\cite{mangat1994improved}, which we can interpret as asymmetrization of the traditional randomized response (i.e., Warner's randomized response~\cite{warner1965randomized}), is useful to understand the privacy of asymmetricity of our neighboring relationship in the local setting for binary value (i.e., a dataset is a record).
Mangat focused on the fact that it is often the case that either answer "Yes" or "No" is non-sensitive.
For example, consider the question "do you have a criminal record?".
The answer "Yes" is seriously sensitive, but the answer "No" is relatively non-sensitive.
Mangat's RR returns "Yes" with probability $1$ when the input is "Yes".
When the input is "No", Mangat's RR returns "Yes" with some probability $1-p$ and "No" with some probability $p$.
When the original answer is "Yes", an adversary cannot know the original answer in the way same as DP, but when the original answer is "No", an adversary may correctly know the original answer.
Our paper also utilizes this characteristic for the question "did you visit a location?".
To express this privacy protection in the more general case (i.e., central and multiple attributes), we propose ADP.

It is noted that several relaxation models of local differential privacy (LDP) implicitly expresses asymmetric neighboring relationship~\cite{murakami2019utility,acharya2019context,gu2020providing,cao2020pglp}.
Utility optimized local DP~(ULDP)~\cite{murakami2019utility} is another privacy definition that expresses an asymmetric neighboring relationship.
However, ULDP is defined in the local setting, so we cannot apply ULDP to our task (i.e., central setting).
Context-aware local DP~\cite{acharya2019context} and input-discriminative local DP~\cite{gu2020providing}, which are generalizations of ULDP\@, both define a different indistinguishability level for each combination of data.
Therefore, their relaxations include our relaxation in the local setting, but for the same reason as for ULDP, we cannot apply them to our task.

\section{Background}
Here, we first introduce DP and related definitions, which are the bases of our proposed notions.
Second, we explain counting query, decision problem, and one-sided error this paper handles.
Finally, we show the problem setting of this paper.

The notations used in this paper are listed in Table~\ref{tab:notation}.

\begin{table}[]
\centering
\begin{tabular}{@{}cl@{}}
\toprule
Symbol                                                       & Meaning                                                                                                        \\ \midrule
$X=(x^1,\dots,x^n )\in\mathcal{X}^n$                                            & A dataset.                                                                           \\
$X\sim_{x,x^\prime} X^\prime$ & $X^\prime$ is constructed by replacing \\ 
 & one record that is $x$ with $x^\prime$. \\
$n$                                                         & The number of records in a dataset                                                                               \\
$[n]$ & A set $\{1,\dots,n\}$.\\
$x\in\mathcal{X}$                                     & A record.                                                                          \\
$z\in\mathcal{Z}$                       & An output of a mechanism.\\ 
$\lambda$                                                    & A random variable.                                                                                             \\
$p^t_\leq:\mathbb{N}\to \{\rm False, True\}$ & A threshold proposition with threshold $t$.\\
$M:\mathcal{X}^n\to\mathcal{Z}$ & A randomized mechanism. \\
$\mathbb{R}$                                     & The universe of a real number.\\ 
$\mathbb{N}$  &The universe of a natural number.                                                            \\
$\varepsilon\in\mathbb{R}^+, \delta\in[0,1]$                                     & Privacy parameters.                                                                                                \\
$f_C:\mathcal{X}^n\to\mathcal{Z}$                                & A counting query with conditions $C$.               \\
$\pi$ & A prior distribution on $\mathcal{X}$.\\
\bottomrule
\end{tabular}
\vspace{10pt}
\caption{Notations used in the paper.}
\label{tab:notation}
\end{table}

\subsection{Differential Privacy}
DP~\cite{dwork2006differential} is a mathematical privacy definition that quantitatively evaluates the privacy protection of a randomized mechanism.
A randomized mechanism $M$ is a randomized function that takes a dataset as input and randomly returns $z\in\mathcal{Z}$.
For preliminary, we introduce the definition of the neighboring relationship of DP.
\begin{definition}[Neighboring relationship]
$X$ is neighboring to $X^\prime$ with respect to $x$ and $x^\prime$ if $X^\prime$ is constructed by replacing one record that is $x$ of $X$ with $x^\prime$.
This is denoted by $X\sim_{x,x^\prime} X^\prime$ or simply $X\sim X^\prime$.
\end{definition}
Then, we introduce the definition of approximate DP denoted by $(\varepsilon,\delta)$-DP, which is the most popular generalization of DP.
\begin{definition}[($\varepsilon, \delta$)-DP]
\label{def:dp}
A randomized mechanism $M$ satisfies ($\varepsilon,\delta$)-DP iff $\forall X, X' \in \mathcal{X}^n$ such that $X\sim X^\prime$ and $\forall S \subseteq \mathcal{Z}$,
\begin{equation}
\label{eq:dp}
  \Pr[M(X)\in S] \leq \mathrm{e}^\varepsilon \Pr[M(X^\prime)\in S] + \delta ,
\end{equation}
where $\mathcal{X}^n$ and $\mathcal{Z}$ are the universe of a dataset and an output, respectively.
Simply $\varepsilon$-DP denotes ($\varepsilon,0$)-DP.
\end{definition}

\subsubsection{Generalization of $\varepsilon$-DP}
\label{subsec:generalization of dp}
We introduce two privacy definitions that generalize $\varepsilon$-DP by changing the definition of the neighboring relationship: Blowfish privacy~\cite{Haney2015Design} and OSDP~\cite{doudalis2020one}.
\paragraph*{Blowfish privacy}
\label{subsubsec:def_blowfish}
This paper introduces the simplified version of Blowfish privacy by Haney et al.~\cite{Haney2015Design}.
\begin{definition}[Blowfish privacy]
Given policy graph $G=(\mathcal{X},E)$, $M$ satisfies $(\varepsilon,G)$-Blowfish privacy if $\forall z\in\mathcal{Z}$, $X,X^\prime$ such that $X\sim_{x,x^\prime} X^\prime$ and $(x,x^\prime)\in E,$
$
\Pr[M(X)=z]\leq\mathrm{e}^\varepsilon \Pr[M(X^\prime)=z].
$
\end{definition}
Blowfish privacy changes the definition of the neighboring relationship of $\varepsilon$-DP using a graph.
Note that we do not consider the dummy value~\cite{Haney2015Design} to hide the existence of a record because we assume that the size of a dataset is public information.

\paragraph*{One-sided differential privacy}
OSDP was proposed by Doudalis et al.~\cite{doudalis2020one} to utilize non-sensitive records.
First, we define a set of non-sensitive records $N$ and policy function $P_N:\mathcal{X}\to \{\rm False, True\}$ such that
$$
P_N(x)=\begin{cases}
{\rm True}\ {\rm if}\ x\in N  \\
{\rm False} \ {\rm otherwise}.
\end{cases}
$$
Then, we have the following definition:
\begin{definition}[OSDP]
$M$ satisfies $(\varepsilon, P_N)$-OSDP iff $\forall z\in\mathcal{Z}, X, X^\prime\in\mathcal{X}^n$ such that $X\sim_{x,x^\prime} X^\prime$ and $P_N(x)={\rm False}$,
$
\Pr[M(X)=z]\leq\mathrm{e}^\varepsilon \Pr[M(X^\prime)=z].
$
\end{definition}
OSDP also changes the definition of neighboring relationship of $\varepsilon$-DP by using policy function $P_N$.

\subsection{Problem Setting: Decision Problem}
\label{subsec:decision problem}
In this paper, as a \textit{decision problem}\footnote{We assume a data-dependent decision problem $q$ (i.e., $\exists X,X^\prime\in\mathcal{X}^n, q(X)\neq q(X^\prime)$).}, we introduce a \textit{threshold proposition} to the answer of a \textit{counting query}.
Here, we describe the definitions of counting query, threshold proposition, and one-sided error.
Then, we describe the problem setting of this paper using these notions.

\subsubsection{Counting Query}
\label{subsec:counting query}
Counting query is one of the most basic statistical queries.
Counting query appears in fractional form, with weights (linear query), or in more complex form, but this paper considers the most simple type of counting query $f_C:\mathcal{X}^n\to\{0,1,\dots,n\}^d$, which counts the numbers of records satisfying each condition $c_1,\dots,c_d$, where $C=(c_1,\dots,c_d)$ is a tuple of propositional functions to a record.

\subsubsection{Threshold proposition}
The threshold proposition is a simple proposition to judge whether the number is above (under) the threshold or not.
\begin{definition}
Threshold proposition $p^t_\leq:\mathbb{N}\to\{\rm False,True\}$ is the propositional function such that
\begin{align}
p^t_\leq(m)= \left\{ \begin{array}{ll}
{\rm True} & {\rm if}\ m \leq t \\
{\rm False} & {\rm otherwise} \\
\end{array} \right.
\end{align}
In the same vein, we also define $p^t_\geq (m)$ to judge whether $m$ is above $t$ or not.
\end{definition}

\subsubsection{$(q,\alpha,\beta)$-one-sided error}
\label{sec:one-sided accuracy}
One-sided error is first defined on the Monte Carlo algorithm~\cite{santha1995monte} that is randomized to answer problems that are difficult with respect to complexity.
We say that an algorithm is one-sided if the output True is always correct (i.e., true-biased).
Then, one-sided error is defined on the one-sided algorithm as the probability that the algorithm outputs wrongly outputs False.

To more deeply explore the characteristic of DP with respect to one-sided error, we generalize one-sided error.
Intuitively, we allow error occurring with small probability $\alpha$ for the answer True.
In this state, if the probability that a mechanism wrongly outputs False is lower than $1-\beta$, we say the mechanism is $(q,\alpha,\beta)$-one-sided.
Formally,
\begin{definition}[$(q,\alpha,\beta)$-one-sided]
Mechanism $M$ is $(q,\alpha,\beta)$-one-sided for $X$ such that $q(X)={\rm True}$ if $\forall X^\prime\in\mathcal{X}^n$ such that $q(X^\prime)={\rm False}$, there exists $p^{\rm san}:\mathcal{Z}\to\{\rm False, True\}$ such that
$
\Pr[p^{\rm san}\circ M(X^\prime)={\rm True}]<\alpha\ {\rm and}
$
$
\Pr[p^{\rm san}\circ M(X)={\rm True}] \geq \beta.
$
\end{definition}
Intuitively, $p^{\rm san}$ works like a \textit{sanitizer}~\cite{blum2013learning} to answer $q$.
We call such $p^{\rm san}$ $(q,\alpha,\beta)$-sanitizer.
We can get to know $q(X)={\rm True}$ with high probability (i.e., $1-\alpha$) when we see output True from a $(q,\alpha,\beta)$-sanitizer.
For example, if $M$ is $(q,0,\beta)$-one-sided for $X$, there is a post-processing function $p^{\rm san}$ such that the answer True of $p^{\rm san}\circ M(X)$ is always true and the one-sided error is lower than $1-\beta$.

\subsubsection{Problem Setting}
\label{subsec:problem_setting}
Formally, the purpose is to consider a mechanism that is one-sided to answer the following queries in a privacy preserving way.
Given target locations $(l_1,l_2,\dots,l_d)$ and time $T$ for $i\in [d],$ about an epidemic disease,
\begin{itemize}
    \item $q^i_\leq = p^t_\leq \circ f_{C^T}(\cdot)_i$: location $l_i$ is not an epicenter?
    \item $q^i_\geq = p^t_\geq \circ f_{C^T}(\cdot)_i$: location $l_i$ is an epicenter?,
\end{itemize}
where $C^T=(c_1^T,c_2^T,\dots,c_d^T)$ and $c^T_i(x)$ judges whether trajectory $x$ includes location $l_i$ in time $T$ or not.

\section{Insufficiency of Existing Privacy Definitions}
\label{sec:sufficiency}
Here, we clarify the insufficiencies of the existing privacy definitions: $\varepsilon$-DP, $(\varepsilon,\delta)$-DP, Blowfish privacy, and OSDP, which is our motivation to introduce ADP for our problem.
Concretely, we show that unreasonable privacy protection is required for a mechanism to be $(q^i_\leq,\alpha,\beta)$-one-sided for each privacy definition.
To do so, we first derive the necessary condition for a mechanism to be $(q^i_\leq,\alpha,\beta)$-one-sided.
\begin{proposition}
\label{prop:necessary condition for one-sided error}
If $M$ is $(q,\alpha,\beta)$-one-sided for $X$ such that $q(X)={\rm True}$,
$\forall X^\prime$ such that $q(X^\prime)={\rm False}$, $\exists S\subseteq \mathcal{Z}$
\begin{equation}
\label{eq:necessarcy condition}
\alpha\Pr[M(X)\in S] > \beta \Pr[M(X^\prime)\in S].
\end{equation}
\end{proposition}


Using this necessary condition, we clarify the insufficinecies in the case $q^i_\leq =p^t_\leq \circ f_{C^T}(\cdot)_i$. 
Note that the same is true in the case of $q^i_\geq =p^t_\leq \circ f_{C^T}(\cdot)_i$.

\subsection{$\varepsilon$-DP}
There exist two datasets $X,X^\prime$ such that $q(X)={\rm True}$, $q(X^\prime)={\rm False}$, and $d_{\rm ham}(X,X^\prime)=k<\infty$, where $d_{\rm ham}$ is the hamming distance.
Therefore, if mechanism $M$ satisfies $\varepsilon$-DP, $\forall S\subseteq \mathcal{Z},$
\begin{equation}
\label{eq:dp_condition}
\Pr[M(X)\in S] \leq \mathrm{e}^{k\varepsilon} \Pr[M(X^\prime)\in S].
\end{equation}
If $M$ achieves both of $(q,\alpha,\beta)$-one-sided and $\varepsilon$-DP, by combining two conditions $(\ref{eq:necessarcy condition})$ and $(\ref{eq:dp_condition})$,
\begin{equation}
\label{eq:epsilon dp condition}    
\varepsilon > \frac{\log (\beta/\alpha)}{k}.
\end{equation}

For example, when $k=1, \alpha=10^{-4}$, and $\beta=1/2$, it must hold that $\varepsilon > 8.4$ to achieve $(q,\alpha,\beta)$-one-sided.
Such value of $\varepsilon$ is abnormal~\cite{hsu2014differential} because $\varepsilon$ is the exponent of $\mathrm{e}$.
Moreover, when $\alpha=0$, there does not exists $\varepsilon<\infty$.

\subsection{$(\varepsilon,\delta)$-DP}
\label{sec:insufficiency_dp}
We can show the insufficiency by the same way as the case of $\varepsilon$-DP.
If a mechanism $M$ satisfies $(\varepsilon,\delta)$-DP, $\forall S\subseteq\mathcal{Z}$,
\begin{equation}
\label{eq:approximate dp condition}
    \Pr[M(X)\in S]\leq \mathrm{e}^{k\varepsilon}\Pr[M(X^\prime)\in S]+\delta\sum_{i=1}^k \mathrm{e}^{(i-1)\varepsilon}.
\end{equation}
If $M$ achieves both of $(q,\alpha,\beta)$-one-sided and $(\varepsilon,\delta)$-DP, by combining two conditions $(\ref{eq:necessarcy condition})$ and $(\ref{eq:approximate dp condition})$,
\begin{equation}
\label{ineq:approximate dp condition}
\delta > \frac{\beta-\alpha\mathrm{e}^{\varepsilon k}}{\sum_{i=1}^{k} \mathrm{e}^{(i-1)\varepsilon}}.
\end{equation}
For example, when $k=3, \alpha=0, \beta=1/2$, and $\varepsilon=1$, it must hold that $\delta>0.12$.
This value of $\delta$ is clearly unreasonable because $\delta$ is the probability that $\varepsilon$-DP breaks.

\subsection{Blowfish privacy and OSDP}
\label{sec:insufficiency_adversary}
Here, we assume that $\varepsilon<\frac{\log (\alpha/\beta)}{k}$, so an $\varepsilon$-DP mechanism fails to be $(q,\alpha,\beta)$-one-sided from the condition $(\ref{eq:epsilon dp condition})$.
Hence, we consider to relax $\varepsilon$-DP by the ways of Blowfish privacy (policy graph) and OSDP (policy function), respectively so that a mechanism is $(q,\alpha,\beta)$-one-sided.
Then, we evaluate privacy protection of them; however, how to evaluate them is not trivial because policy graph and policy function are qualitative.
To this end, we first introduce a quantitative function based on Bayesian analysis~\cite{mironov2017renyi}, called Risk, which shows the insufficiencies.

\paragraph*{Risk}
Given dataset $X=(x^1,x^2,\dots,x^n)$, we consider the privacy of $x^1$ without loss of generality.
We consider $x^1$ as a realization of random variable $x^\pi$ which follows probability distribution $\pi$ from the view of an adversary with knowledge $\pi$.
We assume that the adversary tries to infer whether $x^1$ belongs to an attribute designated by $p$ (i.e., whether $p(x^1)={\rm True}$ or ${\rm False}$).
For $b\in \{\rm False, True\}$, we model the belief about $p$ of the adversary before seeing an output by
$$
\Pr[p(x^\pi)=b]:=\sum_{x\in \{x|p(x)=b\}}\pi (x)
$$
and the belief after seeing output $M(X)=z$ by
$$
\Pr[p(x^\pi)=b|z]:=\sum_{x\in\{x|p(x)=b\}}\pi (x|z),
$$
where $\pi(\cdot|z)$ is the posterior distribution given $z$.
Then, we define {\rm Risk} as the maximum odds ratio; ${\rm Risk}_{\pi}$ measures how much the adversary updates the belief.
${\rm Risk}_{\pi}(M,p):=$
$$
\max_{z\in \mathcal{Z}} \left(\left.
\frac{\Pr[p(x^\pi)={\rm True}|z]}{\Pr[p(x^\pi)={\rm False}|z]}\middle/ \frac{\Pr[p(x^\pi)={\rm True}]}{\Pr[p(x^\pi)={\rm False}]}\right. \right).
$$
When $\Pr [p(x^\pi)={\rm False}]=0$, we define ${\rm Risk}_{\pi}(M,p)$ as $0$ because the adversary originally knows the information $p(x^1)={\rm True}$ and nothing updates.
Otherwise, we follow the convention $a/0:=\infty$ for $a>0$.
Intuitively, if the adversary more gets the information $p(x^1)={\rm True}$ from $z$, Risk becomes greater.
If the adversary may exactly get the information from $z$, Risk is $\infty$.
For example, if mechanism $M$ satisfies $\varepsilon$-DP, for any $p$ and $\pi$,
${\rm Risk}_{\pi}(M,p)\leq \mathrm{e}^\varepsilon.$
That is, $\varepsilon$-DP protects any information against any adversary in this setting.
\iflong
\begin{proof}
Consider random variable $X=(x^\pi,x^2,\dots,x^n)$ where $x^2,x^3,\cdots,x^n$ are constant.
Then, for $b\in\{\rm True, False\}$, $\Pr[M(X)=z|p(x^\pi)=b]$ is
$$
\sum_{\{x|p(x)\}}\frac{\Pr[x^\pi=x]}{\Pr[p(x^\pi)=b]}\Pr[M(X)=z|x^\pi=x].
$$
Therefore, by applying the definition of DP and adjusting the coefficient for each term, if mechanism $M$ satisfies $\varepsilon$-DP,
$$
\Pr [M(X)=z|p(x^\pi)={\rm True}]\leq \mathrm{e}^\varepsilon \Pr[M(X)=z|p(x^\pi)={\rm False}],
$$
which derives what we wanted.
\end{proof}
\fi
\subsubsection{Blowfish privacy}
\label{sec:insufficiency_blowfish}
In Blowfish privacy, the neighboring relationship is stipulated by graph $G=(\mathcal{X},E)$.
We construct a graph which allows a mechanism that is $(q,\alpha,\beta)$-one-sided from the necessary condition~(\ref{eq:necessarcy condition}).
We can derive the following necessary condition for a policy graph from the definitions.
\begin{proposition}
Given $\varepsilon<\frac{\log (\alpha/\beta)}{k}$, if mechanism $M$ is $(p^t_\leq \circ f_{C^T}(\cdot)_i,\alpha,\beta)$-one-sided and satisfies $(\varepsilon,(\mathcal{X},E))$-Blowfish privacy,
$\forall x,x^\prime\in\mathcal{X}$ such that $c^T_i(x)\neq c^T_i(x^\prime)$,
\begin{equation}
\label{eq:necesary_condition_graph}
(x,x^\prime) \notin E.
\end{equation}
\end{proposition}
\iflong
\begin{proof}
We let $q=p^t_\leq \circ f_{C^T}(\cdot)_i$ for simplicity.
Assume the following which derives the contradiction: $\exists x,x^\prime$ such that $c_i^T(x)\neq c^T_i(x^\prime)$, $(x,x^\prime)\in E$.
In this case, there exists $X,X^\prime$ such that $X\sim_{x,x^\prime} X^\prime$ and $q(X)={\rm True}$ and $q(X^\prime)={\rm False}$.
Since $(x,x^\prime)\in E$,
$$
\Pr[M(X)=z]\leq\mathrm{e}^\varepsilon\Pr[M(X^\prime)=z]
$$
from the definition of Blowfish privacy.
Combining this to (\ref{eq:necessarcy condition}) derives $\varepsilon>\log(\alpha/\beta)$, which is contradiction.
\end{proof}
\fi
Then, we can show that there exists $(\varepsilon,G)$-blowfish private mechanism $M$ such that $\exists \pi$,
$
{\rm Risk}_\pi (M,\neg c^T_i)=\infty\ {\rm and}\ {\rm Risk}_\pi (M,c^T_i)=\infty,
$
when $G$ satisfies the condition~(\ref{eq:necesary_condition_graph}).
This implies that an adversary gets to know the information that $\neg c^T_i(x^1)={\rm True}$ or $c^T_i(x^1)={\rm True}$ from the output of a mechanism.
When $\neg c^T_i(x^1)={\rm True}(x)$ or $c^T_i(x^1)={\rm True}(x)$ is sensitive, (it is often that either of them is sensitive as described in Introduction), this privacy protection is unreasonable.

\subsubsection{OSDP}
\label{sec:insufficiency_osdp}
In OSDP, the neighboring relationship is stipulated by function $P_N:\mathcal{X}\to \{\rm False, True\}$, which judges whether the input record is non-sensitive or not.
We construct a function which allows a mechanism that is $(q,\alpha,\beta)$-one-sided from the necessary condition~(\ref{eq:necessarcy condition}).
We can derive the following necessary condition for a policy function from the definitions.
\begin{proposition}
Given $\varepsilon<\frac{\log (\alpha/\beta)}{k}$, if a mechanism $M$ is $(p^t_\leq\circ f_{C^T}(\cdot)_i,\alpha,\beta)$-one-sided and satisfies $(\varepsilon, P_N)$-OSDP,
$\forall x\in\mathcal{X}$ such that $c^T_i(x)={\rm True}$,
\begin{equation}
\label{eq:necessary_condition_policy_function}
x\in N.
\end{equation}
\end{proposition}
\iflong
\begin{proof}
We let $q=p^t_\leq \circ f_{C^T}(\cdot)_i$ for simplicity.
Assume the following which derives the contradiction: $\exists x$ such that $c_i^T(x)={\rm True}$, $x\notin N$.
In this case, there exists $X,X^\prime$ such that $X\sim_{x,x^\prime} X^\prime$ and $q(X)={\rm True}$ and $q(X^\prime)={\rm False}$.
Since $P_N(x)={\rm True}$,
$$
\Pr[M(X)=z]\leq\mathrm{e}^\varepsilon\Pr[M(X^\prime)=z]
$$
from the definition of OSDP.
Combining this to (\ref{eq:necessarcy condition}) derives $\varepsilon>\log(\alpha/\beta)$, which is contradiction.
\end{proof}
\fi
Then, we can show that there exists $(\varepsilon,P_N)$-OSDP mechanism such that $\exists \pi$,
$
{\rm Risk}_\pi (M,(\neg c^T_{i})\land p)=\infty,
$
where $N$ satisfies the condition~(\ref{eq:necessary_condition_policy_function}) and $(\neg c^T_{i})\land p(x):=(\neg c^T_i(x))\land p(x)$ where $p$ is any propositional function to $x\in \mathcal{X}$.
This implies that an adversary may get to know any information of $x$ if the owner did not visit location $i$ at time $T$.
This is clearly unreasonable.

\paragraph*{Remark}
If $\mathcal{X}=\{\rm False, True\}$ that indicates whether the user visited or not the target location instead of trajectories, $(\varepsilon, P_N)$-OSDP is reasonable, where $N=\{\rm False\}$, under the assumption that the information that a user did not visit the target location is non-sensitive because $x\in\mathcal{X}$ does not include any other information.
Therefore, if the target is one location, OSDP can provide reasonable privacy protection.
Note that sequentially applying this for multiple locations return to the above problem.

\section{Asymmetric Differential Privacy}
\label{sec:propose}
As shown in the previous section, existing privacy definitions result in unreasonable privacy protection to achieve one-sided error.
To solve this problem, we propose ADP, which provides reasonable privacy protection while achieving one-sided error.
ADP is also the definition that changes the neighboring relationship of $\varepsilon$-DP, so ADP belongs to the same family as Blowfish privacy and OSDP.

First, we describe the definition of ADP.
Second, we state the privacy guarantee of $(\varepsilon,p)$-ADP.
Third, we introduce $p^i_\leq$ and $p^i_\geq$, which are policies to answer $q^i_\leq$ and $q^i_\geq$ (see Section~\ref{subsec:problem_setting} for them) with one-sided error.

\subsection{Definition}
\label{subsec:definition}
First, we define new neighboring relationship, called $p$-neighboring relationship, using propositional function $p:\mathcal{X}\to \{\rm False, True\}$ called \textit{policy}.
Formally,
\begin{definition}[$p$-neighboring relationship]
Given two datasets $X$ and $X^\prime$,
$X$ is $p$-neighboring to $X^\prime$, if $X\sim_{x,x^\prime}X^\prime$ and ($\neg p(x)$ or ($p(x)\land p(x^\prime)$))$={\rm False}$ (i.e., $(p(x)\to p(x^\prime))={\rm False}$).
We denote this by $X\sim^p_{x,x^\prime} X^\prime$ (or simply $X\sim^p X^\prime$).
\end{definition}
Note that $X\sim^p X^\prime$ does not mean $X^\prime\sim^p X$, which is the origin of the name of \textit{asymmetric} differential privacy.
Based on $p$-neighboring relationship, we define ADP.
\begin{definition}[ADP]
Randomized mechanism $M$ satisfies $(\varepsilon,p)$-ADP iff $\forall z\in\mathcal{Z}, X,X^\prime$ such that $X\sim^p X^\prime$,
$$
\Pr[M(X)=z]\leq\mathrm{e}^\varepsilon \Pr[M(X^\prime)=z].
$$
\end{definition}

\paragraph*{Remark}
We design $p$-neighboring relationship to circumvent the conflict of conditions (\ref{eq:necessarcy condition}) and (\ref{eq:dp_condition}) so that a mechanism can be $(q,\alpha,\beta)$-one-sided for $\varepsilon<\frac{\log (\alpha/\beta)}{k}$.
Therefore, ADP is the natural relaxation to achieve one-sided error.

\subsection{Privacy guarantee of $(\varepsilon,p)$-ADP}
\label{subsec:privacy guarantee}
In the previous section, we define $(\varepsilon,p)$-ADP to achieve one-sided error by relaxing DP, so it is not clear that the privacy guarantee of $(\varepsilon,p)$-ADP is reasonable.
Hence, we describe the privacy guarantee of $(\varepsilon,p)$-ADP.
The relaxation by $p$ is qualitative just like Blowfish privacy and OSDP, so we use Risk introduced in Section~\ref{sec:insufficiency_adversary} to evaluate the privacy of $(\varepsilon,p)$-ADP.
This analysis shows what information is leaked and protected by mechanism $M$ which satisfies $(\varepsilon,p)$-ADP.

\subsubsection{Leaked information}
\label{sec:leaked information}
Given two datasets $X\sim_{x,x^\prime} X^\prime$, $X$ is not $p$-neighboring to $X^\prime$ if $p(x)={\rm True}$ and $p(x^\prime)={\rm False}$.
This implicates that $M(X)$ is distinguishable from $M(X^\prime)$ for such $X$ and $X^\prime$ even if $M$ satisfies $(\varepsilon,p)$-ADP.
Therefore, an adversary may infer the information $p(x)={\rm True}$ from output of $M(X)$.
Formally,
\begin{proposition}
\label{prop:leaked_information}
There exists $(\varepsilon,p)$-ADP mechanism $M$ such that $\exists \pi,$
$
{\rm Risk}_{\pi}(M,p) = \infty.
$
\end{proposition}
\iflong
\begin{proof}
Consider the following mechanism $M$:
For $X$ such that $|\{x\in X|p(x)={\rm False}\}|>0$,
\begin{align}
\nonumber
M(X)=
\left\{ \begin{array}{ll}
{\rm True} & ({\rm w.p.}\ 0) \\
{\rm False} & ({\rm w.p.}\ 1) \\
\end{array} \right.
\end{align}
For $X$ such that $|\{x\in X|p(x)={\rm False}\}|=0$,
\begin{align}
\nonumber
M(X)=
\left\{ \begin{array}{ll}
{\rm True} & ({\rm w.p.}\ 1-1/\mathrm{e}^{\varepsilon}) \\
{\rm False} & ({\rm w.p.}\ 1/\mathrm{e}^{\varepsilon}) \\
\end{array} \right.
\end{align}
This mechanism satisfies $(\varepsilon,p)$-ADP and $\exists \pi$, ${\rm Risk}_{\pi}(M,p) = \infty$.
\end{proof}
\fi

\subsubsection{Protected information}
First, we clarify information $(\varepsilon,p)$-ADP protects against a general adversary (i.e., any prior knowledge $\pi$).
\begin{proposition}
\label{prop:protection_not_relate}
If mechanism $M$ satisfies $(\varepsilon,p)$-ADP, for any $\pi$ and $p^\prime:\mathcal{X}\to\{{\rm False, True}\}$,
$
{\rm Risk}_{\pi}(M,(\neg p)\land p^\prime) \leq  \mathrm{e}^\varepsilon,
$
where $(\neg p)\land p^\prime (x):=(\neg p(x))\land p^\prime (x)$.
\end{proposition}
This proposition is derived from the definition of $(\varepsilon,p)$-ADP.
\iflong
\begin{proof}
Here, we let $p^{\prime\prime}=(\neg p)\land p^\prime$ for simplicity.
From the definition of $(\varepsilon,p)$-ADP, $\forall z\in \mathcal{Z}, x^\prime\in\mathcal{X}$ and $X\sim_{x,x^\prime} X^\prime$ such that $p(x)={\rm False}$,
$$
\Pr[M(X)=z]\leq \mathrm{e}^\varepsilon \Pr[M(X^\prime)=z].
$$
From the definition of $p^{\prime\prime}$, when $p^{\prime\prime}(x)={\rm True}$, $p(x)={\rm False}$.
Therefore, 
\begin{equation}
\nonumber
\begin{split}
&\Pr[M(X)=z|p^{\prime\prime}(x^1)={\rm True}]\leq\\
&\mathrm{e}^\varepsilon\Pr[M(X)=z|p^{\prime\prime}(x^1)={\rm False}].
\end{split}
\end{equation}
which induces ${\rm Risk}_{\pi}(M,(\neg p)\land p^\prime) \leq  \mathrm{e}^\varepsilon$.
\end{proof}
\fi
This proposition implicates that $(\varepsilon,p)$-ADP mechanism protects information that does not relate to the information $p(x^1)={\rm True}$.
In other words, if $p(x^1)={\rm False}$, $(\varepsilon,p)$-ADP protects any information about $x^1$.
One may wonder if some leakage of information except $p(x^1)={\rm True}$ occurs when $p(x^1)={\rm True}$.
Surprisingly, we can show that $(\varepsilon,p)$-ADP protects any information except $p(x^1)={\rm True}$.

Here, we explore whether any information leaks or not in addition to the information $p(x^1)={\rm True}$.
To do so, we assume that $p(x^1)={\rm True}$ and an adversary knows the information.
Then, we explore how the adversary updates the knowledge from output $M(X)$.
We model the adversary who knows $p(x^1)={\rm True}$ by $\pi^p$, where ${\rm Supp}(\pi^p)=\{x\in\mathcal{X}|p(x)\}$.
Here, ${\rm Supp}(\pi^p)$ is the support of $\pi^p$.
Then, we can derive the following proposition from the definition of $(\varepsilon,p)$-ADP.
\begin{proposition}
\label{prop:protection_relate}
If $M$ satisfies $(\varepsilon,p)$-ADP, for any $p^\prime:\mathcal{X}\to{\{\rm True, False\}}$ and $\pi^p$ such that ${\rm Supp}(\pi^p)=\{x\in\mathcal{X}|p(x)\}$,
$
{\rm Risk}_{\pi^p}(M,p^\prime) \leq  \mathrm{e}^\varepsilon.
$
\end{proposition}
\iflong
\begin{proof}
From the definition of $(\varepsilon,p)$-ADP, for any $z\in \mathcal{Z}$ and $X\sim_{x,x^\prime} X^\prime$ such that $p(x)={\rm True}$ and $p(x^\prime)={\rm True}$,
$$
\Pr[M(X)=z]\leq \mathrm{e}^\varepsilon \Pr[M(X^\prime)=z].
$$
Therefore, for any $p^\prime$,
\begin{equation}
\nonumber
\begin{split}
&\Pr[M(X)=z|p(x^1)\land p^\prime(x^1)={\rm True}]\leq\\ &\mathrm{e}^\varepsilon \Pr[M(X)=z|p(x^1)\land \neg p^\prime(x^1)={\rm True}].
\end{split}
\end{equation}
Therefore, ${\rm Risk}_{\pi^p}(M,p^\prime) \leq  \mathrm{e}^\varepsilon$.
\end{proof}
\fi

\paragraph*{Remark}
These propositions imply the relationship between ADP and OSDP.
From Proposition~\ref{prop:protection_not_relate}, we can soon derive the following inequality.
\begin{equation}
\label{eq:generalization_of_sensitivity_masking}
{\rm Risk}_{\pi}(M,\neg p) \leq \mathrm{e}^\varepsilon
\end{equation}
Given a set of non-sensitive records $N$, consider $(\varepsilon, p=P_N)$-ADP (refer to Section~\ref{subsec:generalization of dp} for $P_N$).
From the implication of Inequality~(\ref{eq:generalization_of_sensitivity_masking}), $(\varepsilon, p)$-ADP protects whether $x$ is non-sensitive or not.
This matches \textit{sensitivity masking} introduced by Doudalis et al.~\cite{doudalis2020one}.
That is, Proposition~\ref{prop:protection_not_relate} is generalization of sensitivity masking.
It is worth noting that OSDP does not induce Proposition~\ref{prop:protection_relate}.
These two differences come from the fact that ADP discriminates information a record may have into sensitive and non-sensitive while OSDP discriminates a record itself into sensitive and non-sensitive.
That is, ADP is more fine-grained privacy relaxation than OSDP, which prevents the insufficiency described in Section~\ref{sec:insufficiency_osdp}.

\subsection{Policies}
Here, we introduce polices to achieve one-sided error for $q^i_{\leq}$ and $q^i_{\geq}$, respectively.
Cutting it short, setting $\neg c^T_i$ and $c^T_i$ allows one-sided error for $q^i_{\leq}$ and $q^i_{\geq}$, respectively.

\subsubsection{Policy for $q^i_{\leq}$}
\label{sec:one-sided error for q_leq}
Here, we consider two datasets $X$ and $X^\prime$ such that $q^i_\leq(X)={\rm True}$ and $q^i_\leq(X^\prime)={\rm False}$.
The conditions of (\ref{eq:necessarcy condition}) and (\ref{eq:dp_condition}) conflict because there is a path between $X$ and $X^\prime$ via the neighboring relationship\footnote{"There is a path between $X$ and $X^\prime$ via a neighboring relationship (e.g., $\sim$)" means that there is a sequence of datasets $(X,X^1,\dots,X^l,X^\prime)$ such that $X\sim X^1$, $X^1\sim X^2,\dots, X^l\sim X^\prime$.}.
Therefore, setting $p$ so that there is no path between $X$ and $X^\prime$ via $p$-neighboring avoids the conflict.
To stipulate such neighboring relationship, we introduce $p$ such that $\forall x\in\mathcal{X},$
$
\neg c^T_i(x)\to p(x) = {\rm True}
$

Then, using such $p$, we analyze one-sided error of an $(\varepsilon,p)$-ADP mechanism.
To this end, we first introduce the minimum path instead of the hamming distance in DP.
\begin{definition}[the minimum path]
The minimum path from $X$ to $X^\prime$, denoted by $d^p_{\rm min}(X,X^\prime)$, is the minimum number of steps to take to change $X$ to $X^\prime$ via $p$-neighboring datasets.
\end{definition}
If $d^p_{\rm min}(X,X^\prime)=k$, there is a $k$ steps path.
Using the minimum path, we can analyze the one-sided error of an $(\varepsilon,p)$-ADP mechanism as follows.
\begin{proposition}
\label{prop:adp_onesided}
If $(\varepsilon,p)$-ADP mechanism is $(q,\alpha,\beta)$-one-sided for $X$,
$
\varepsilon \geq \frac{\log((1-\alpha)/(1-\beta))}{k(X)},
$
where $k(X)=\min_{\{X^\prime|q(X^\prime)={\rm False}\}} \{d^p_{\rm min}(X,X^\prime)\}$.
\end{proposition}
Important remark is that $\varepsilon$ can be small value even when $\alpha=0$ by relaxation of $p$.
This relaxation of $p$ leads to the leakage of the information that the user did not visit the location $l_i$.
We believe that the information is relatively non-sensitive.
Therefore, we adopt this policy.

\subsubsection{Policy for $q^i_{\geq}$}
\label{sec:one-sided error for q_geq}
In the similar vein, for a mechanism to be $(q^i_{\geq},\alpha,\beta)$-one-sided and achieve $(\varepsilon,p)$-ADP, we introduce $p$ such that $\forall x\in\mathcal{X},$
$
c^T_i(x)\to p(x) ={\rm True}.
$
From proposition~\ref{prop:leaked_information},
this policy leaks information that the user visited location $l_i$.
Therefore, we need to assume that such information is non-sensitive; however we believe such assumption is not realistic.
Hence, we do not adopt this policy, so our mechanism will provide the same performance for $q^i_{\geq}$ as $\varepsilon$-DP.

\section{Mechanisms}
\label{sec:asvt}
Here, we propose mechanisms that are $(q^i_\leq,\alpha,\beta)$-one-sided and satisfy $(\varepsilon,p)$-ADP.

\subsection{Asymmetric Laplace mechanism}
\label{sec:adp_primitive_mechanism}
We introduce an asymmetric Laplace mechanism (ALap), the ADP version of the Laplace mechanism, as the most basic mechanism.
ALap perturbs the answer of counting query $f:\mathcal{X}^n\to\mathbb{R}^d$.
First, we define $p$-sensitivity, which corresponds to the notion of \textit{sensitivity} of DP.
Then, we define ALap using $p$-sensitivity\footnote{The ADP version of the geometric mechanism~\cite{ghosh2012universally} is optimal and better than the Laplace mechanism for counting query, but we omit it due to space limitations and simplicity, and we refer to the full version for the details.}.

\subsubsection{$p$-sensitivity}
We define $p$-sensitivity of query $f:\mathcal{X}^n\to\mathbb{R}^d$ as the sensitivity induced by policy $p$.
\begin{definition}[$p$-sensitivity of $f$]
Given $p:\mathcal{X}\to\{{\rm True, False}\}$, we define $p$-sensitivity of $f$, denoted by $\Delta_{p}$, as follows:
$
    \Delta_{p}(f) := \sup_{X\sim^p X^\prime} ||f(X^\prime)-f(X) ||_1 .
$
\end{definition}
$p$-sensitivity has a unique characteristic called \textit{monotonicity} which does not appear in the sensitivity of DP.
\begin{definition}[Monotonicity of $p$-sensitivity]
The $p$-sensitivity of $f(\cdot)_i$ ($\Delta_{p}(f(\cdot)_i)$) is monotonically increasing (decreasing) iff $\forall X\sim^p X^\prime$, 
$
f(X)_i\leq f(X^\prime)_i\  (f(X)_i\geq f(X^\prime)_i).
$
If for all $i\in [d]$, $\Delta_{p}(f(\cdot)_i)$ is monotonically increasing (decreasing), we simply say that the $p$-sensitivity of $f$ is monotonically increasing (decreasing).
Also, we define function ${\rm Sign}_p$, which discriminates the monotonicity. ${\rm Sign}_p(f(\cdot)_i):=$
$$
\begin{cases}
+1\ {\rm(if\ } \Delta_{p}(f(\cdot)_i){\rm\ is\ monotonically\ increasing)}\\
-1\ {\rm(if\ } \Delta_{p}(f(\cdot)_i){\rm\ is\ monotonically\ decreasing)}
\end{cases}
$$
\end{definition}

\subsubsection{Definition}
\label{sec:alap}
We propose ALap, which perturbs the answer of query $f:\mathcal{X}^n\to\mathbb{R}^d$ according to $p$-sensitivity\footnote{This mechanism is a generalization of OSDPLaplace proposed by Doudalis et al.~\cite{doudalis2020one}.}.

\begin{definition}[Asymmetric Laplace mechanism]
Given query $f:\mathcal{X}^n\to\mathbb{R}^d$, policy $p$, and privacy parameter $\varepsilon$, the asymmetric Laplace mechanism $\rm{ALap}$ is as follows:
$
    {\rm ALap}_{p,\varepsilon, f}(X)=f(X)+(\lambda_1,\dots,\lambda_d),
$
where $\lambda_i$ is independently distributed and follows the distribution for each $i\in[d]$:\\
if $p$-sensitivity of $f(\cdot)_i$ is monotonic, 
\begin{align}
\nonumber
\left\{ \begin{array}{ll}
\frac{\varepsilon}{\Delta_{p}(f)}\exp{\frac{-{\rm Sign}_p(f(\cdot)_i)\lambda\varepsilon}{\Delta_{p}(f)}}& ({\rm Sign}_p(f(\cdot)_i)\lambda\geq 0) \\
0 & ({\rm otherwise}). \\
\end{array} \right.
\end{align}
Otherwise,
$
\frac{\varepsilon}{2\Delta_{p}(f)}\exp{\frac{\left|\lambda\right|\varepsilon}{\Delta_{p}(f)}}.
$
\end{definition}
If the $\Delta_{p}(f(\cdot)_i)$ is not monotonic, ALap is the same as the Laplace mechanism for $i$th output.
Otherwise, the distribution is the (one-sided) exponential distribution, which has a smaller variance than the Laplace distribution.

\begin{theorem} 
\label{theo:adp_laplace_privacy}
Given $p$, $\varepsilon\in\mathbb{R}^+, f:\mathcal{X}^n\to\mathbb{R}^l$, ALap$_{p,\varepsilon,f}$ satisfies ($\varepsilon, p$)-ADP\@.
\end{theorem}

\iflong
\begin{proof}
Let $M$ be ALap$_{p,\varepsilon, f}$, and assume arbitrary datasets $X$ and $X^\prime$ such that $X\sim^p X^\prime$.
We have
\begin{align}
\nonumber\frac{\Pr[M(X)=(z_1,\dots,z_d)]}{\Pr[M(X^\prime)=(z_1,\dots,z_d)]}&=\prod_{i\in [d]}\frac{\Pr[\lambda_i=z_i-f(X)_i]}{\Pr[\lambda_i=z_i-f(X^\prime)_i]}\\
&\nonumber\leq\exp{(\varepsilon\left|f(X^\prime)-f(X)\right|/\Delta_{p}(f))}\\
&\nonumber\leq\exp{(\varepsilon)}.
\end{align}
The first inequality is from the definition of the distribution.
\end{proof}
\fi

\subsubsection{Utility analysis}
\label{subsubsec:alap_utility_analysis}
Assume that the $\Delta_{p}(f)$ is monotonically decreasing.
In this case, ALap$_{p,\varepsilon,f}$ uses noise following the (one-sided) exponential distribution.
From this characteristic, we easily derive the following two corollaries about utility:
\begin{corollary}
For all $i\in [d]$,
$
\mathbb{E}[|{\rm ALap}_{p,\varepsilon,f}(X)_i-f(X)_i|]=\frac{\Delta_{p}(f)}{\varepsilon},
$
where the randomness is over the mechanism.
\end{corollary}
\begin{corollary}
\label{col:utility one-sided alap}
Given dataset $X$, if $\varepsilon\geq\frac{\Delta_{p}(f)\log(1/(1-\beta))}{k(X)}$,
ALap$_{p,\varepsilon,f}$ is $(p^t_\leq \circ f_{C^T}(\cdot)_i,\alpha,\beta)$-one-sided for all $i\in [d]$.  
\end{corollary}
This is because the output True is accurate by using $p^t_\leq$ as $p^{\rm san}$ (i.e., outputting $p^t_\leq\circ {\rm ALap}_{p,\varepsilon,f}$).

We can see that ALap improves the utility of the Laplace mechanism, which is two-sided and whose expected absolute error is $\sqrt{2}\Delta/\varepsilon$.
Especially, Corollary~\ref{col:utility one-sided alap} does not depend on $\alpha$.
Note that ${\rm ALap}_{p,\varepsilon,f}$ is optimal with respect to one-sided error when $\Delta_{p}(f)=1$ from Proposition~\ref{prop:adp_onesided}.

\subsection{Sanitized Asymmetric Laplace mechanism}
As shown in Corollary~\ref{col:utility one-sided alap}, ALap requires large $\varepsilon$ proportional to $\Delta_{p}(f)$.
To solve this problem, we propose the sanitized asymmetric Laplace mechanism (SALap), which is robust to large $\Delta_{p}(f)$.

\subsubsection{Definition}
SALap handles $f$, where $\Delta_p(f(\cdot)_i)$ is monotonic and $\Delta_p(f(\cdot)_i)\leq 1$ for all $i\in [d]$.
SALap removes the dependency of $\Delta_{p}(f)$ by sequentially answering $f(\cdot)_i$ from $i=1$.
Informally, outputting ALap$_{p,\varepsilon,f(\cdot)_i}$ consumes $\varepsilon$, but the post-processing output $p^t_\leq \circ $ALap$_{p,\varepsilon,f(\cdot)_i}$ does not consume any privacy budget when it returns True and only consumes when it returns False.
Therefore, we can continue to answer the sequence of queries until outputting False.
Utilizing this characteristic, we construct SALap as Algorithm~\ref{alg:asvt}.
SALap sequentially computes $z=$ALap$_{p,\varepsilon,f(\cdot)_i}$ from $i=1$, and if $p^t_\leq(z)$ is True, it outputs $\perp$; otherwise, it outputs $z$.
SALap stops when SALap outputs a value other than $\perp$.

SALap can answer each query using ALap with the parameter $\varepsilon$ instead of $\varepsilon/\Delta_{p}(f)$.
Therefore, SALap answers the sequence with one-sided error which matches the optimal performance of Proposition~\ref{prop:adp_onesided}.
However, when the sequence includes many queries whose answer is above the threshold, it quickly stops due to the abort operator.
That is, SALap works well for a sequence of queries whose answers are sparse.

\paragraph*{Remark}
SALap is very similar to the sparse vector technique (SVT) of DP~\cite{lyu2016understanding}.
However, ADP improves SVT from the following four differences.
First, SALap does not require noise for the threshold $t$.
Second, we can use parameter $\varepsilon$ for ALap instead of $\varepsilon/2$.
Third, we can answer a noisy counting if the sanitized output is False.
Fourth, the output is one-sided (i.e., $\perp$).

 \begin{algorithm}[t]
 \caption{Sanitized Asymmetric Laplace mechanism \label{alg:asvt}}
 \begin{algorithmic}[1]
 \renewcommand{\algorithmicrequire}{\textbf{Input:}}
 \renewcommand{\algorithmicensure}{\textbf{Output:}}
 \Require $X:$ dataset, $p:$ policy, $\varepsilon:$ privacy parameter, $\{f_1, f_2, \dots, f_d\}$: a sequence of counting queries whose $p$-sensitivity is monotonically decreasing and equal or less than $1$, $\{p^{t_1}_\leq, p^{t_2}_\leq, \dots, p^{t_k}_\leq\}$: a set of threshold propositions
 \State aborted $\Leftarrow\ {\rm False}$
  \For {$i\in [d]$}
  \State $z\Leftarrow {\rm ALap}_{p,\varepsilon,f_i}(X)$
  \If {aborted $=$ True}
    \State $a_i={\rm nan}$
    \State pass
\Else
  \If {$p^{t_i}_\leq(z)$}
  \State Output $a_i=\perp$
  \Else
  \State Output $a_i=z$
  \State aborted $\Leftarrow\ {\rm True}$
  \EndIf
  \EndIf
  \EndFor
 \Ensure  $A=(a_1,a_2,\dots,a_d)$
 \end{algorithmic} 
 \end{algorithm}

\subsubsection{Privacy Analysis}
Here, we show that SALap satisfies $(\varepsilon,p)$-ADP.
 \begin{theorem}
 Algorithm~\ref{alg:asvt} satisfies ($\varepsilon,p$)-ADP\@.
 \end{theorem}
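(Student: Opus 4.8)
The plan is to invoke the randomness-alignment criterion for ADP (Theorem~\ref{theo:ra-adp}), mirroring the argument already used for the asymmetric report noisy max algorithm. Fix a dataset $D$, a $P$-neighbor $D^\prime\in N_P(D)$, and an output $z$. Let $H=(\lambda_1,\dots,\lambda_t)$ be the noise vector, each $\lambda_i$ drawn from the asymmetric Laplace (i.e.\ exponential) distribution of scale $\epsilon/c$ used inside $aLap_{\epsilon/c}$, and let $\mathcal{I}_z$ be the set of indices reported as above-the-threshold (those with $a_i\neq 0$) and $\mathcal{I}_z^c$ the below-the-threshold indices. I would define the local alignment $\phi_{D,D^\prime,z}(H)=H^\prime$ by
\begin{align}
\nonumber
\lambda_i^\prime=\left\{\begin{array}{ll}
\lambda_i & (i\in\mathcal{I}_z^c)\\
\lambda_i+f_i(D)-f_i(D^\prime) & (i\in\mathcal{I}_z)
\end{array}\right.
\end{align}

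First I would check that this is a genuine local alignment, i.e.\ $m_{H^\prime}(D^\prime)=m_H(D)=z$. For an above-threshold index the noisy value is preserved, $f_i(D^\prime)+\lambda_i^\prime=f_i(D)+\lambda_i\geq T_i$, so the same value is reported and the decision is unchanged; moreover, since $P$-sensitivity is monotonically decreasing we have $f_i(D)-f_i(D^\prime)\geq 0$, hence $\lambda_i^\prime\geq\lambda_i\geq 0$ stays in the support of the exponential distribution. For a below-threshold index I keep $\lambda_i^\prime=\lambda_i$; because $f_i(D^\prime)\leq f_i(D)$, we get $f_i(D^\prime)+\lambda_i^\prime\leq f_i(D)+\lambda_i<T_i$, so the index remains below the threshold on $D^\prime$ as well. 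This consistency check for the below-threshold indices is the crux of the argument, and the one-sided noise together with monotonicity is exactly what makes it go through: unlike the symmetric sparse vector technique, no shift of a threshold noise is required to keep the below-threshold indices consistent, which is precisely why no privacy budget is spent on them. I expect this to be the only delicate point, since this is where symmetric sparse-vector proofs are notoriously prone to error.

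Next I would verify acyclicity and bound the cost. Each $\lambda_i^\prime$ equals $\lambda_i$ plus a quantity depending only on $D,D^\prime,z$, so under the identity permutation $\phi_{D,D^\prime,z}$ is acyclic. The exponential distribution of scale $\epsilon/c$ satisfies the log-ratio property of condition~3 with $\alpha_i=c/\epsilon$, and the below-threshold indices incur zero cost, so
\begin{align}
\nonumber
cost(\phi_{D,D^\prime,z})=\frac{\epsilon}{c}\sum_{i\in\mathcal{I}_z}\left|f_i(D)-f_i(D^\prime)\right|\leq\frac{\epsilon}{c}\cdot c\cdot 1=\epsilon,
\end{align}
using that the algorithm aborts after at most $c$ above-threshold answers (so $|\mathcal{I}_z|\leq c$) and that $\Delta_{f_i,P}=1$.

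Finally I would dispatch the remaining hypotheses of Theorem~\ref{theo:ra-adp}: the loop runs over at most $t$ queries and aborts at count $c$, so $m$ terminates with probability $1$; exactly one noise variable is drawn per processed query and the number of processed queries is readable from the length of the output, so the number of random variables is determined by the output; and each distinct local alignment is indexed by the finite set $\mathcal{I}_z\subseteq\{1,\dots,t\}$, so the family $\{\phi_{D,D^\prime,z}\mid z\in\mathcal{Z}\}$ is countable. With all five conditions established, Theorem~\ref{theo:ra-adp} yields that $m$ satisfies $(P,\epsilon)$-ADP.
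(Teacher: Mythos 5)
Your proposal is correct and takes essentially the same route as the paper's own proof: the identical randomness alignment ($\lambda_i^\prime=\lambda_i$ on below-threshold indices, $\lambda_i^\prime=\lambda_i+f_i(D)-f_i(D^\prime)$ on above-threshold ones), justified by decreasing monotonicity and bounded by cost $c\cdot\epsilon/c=\epsilon$ via Theorem~\ref{theo:ra-adp}. In fact you are somewhat more careful than the paper, which leaves implicit the points you verify explicitly (that the shifted noise stays in the support of the exponential distribution, and that conditions 1, 2, and 5 of the alignment theorem hold).
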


\iflong
\begin{proof}
We separately consider two cases: output $\perp$ (line 9) and output $z$ (line 11).
We let $M_i$ denote the mechanism at round $i$ (i.e., lines 2 to 12).
First, we prove that when $M_i$ outputs $\perp$ (i.e., $a_i=\perp$), it holds that for $b\in \{{\rm True, False}\}$,
\begin{equation}
\label{eq:asvt_-1}
\frac{\Pr[M_i(X)=\perp|{\rm aborted}=b]}{\Pr[M_i(X^\prime)=\perp|{\rm aborted}= b]}\leq 1 ,
\end{equation}
where $X\sim ^p X^\prime$.
Since $f_i$ is monotonically decreasing, when $b={\rm False}$,
\begin{align}
\nonumber\Pr[M_i(X)=\perp|b]&=\Pr[{\rm ALap}_{p,\varepsilon,f_i}(X)-t_i\leq 0]\\
\nonumber&=\Pr[\lambda\leq t_i-f_i(X)]\\
\nonumber&\leq\Pr[\lambda\leq t_i-f_i(X^\prime)]\\
\nonumber&=\Pr[{\rm ALap}_{p,\varepsilon,f_i}(X^\prime)-t_i\leq 0]\\
\nonumber&=\Pr[M_i(X^\prime)=\perp|b],
\end{align}
where $\lambda$ is the distribution of ALap.
Therefore, inequality (\ref{eq:asvt_-1}) holds.
The second case is proved from Theorem~\ref{theo:adp_laplace_privacy}: for $b\in\{{\rm True, False}\}$,
$$\frac{\Pr[M_i(X)=z|b]}{\Pr[M_i(X^\prime)=z|b]}\leq \varepsilon.$$
Then, letting $M$ denote Algorithm~\ref{alg:asvt} and $A$ denote the output of $M$,
\begin{align}
\nonumber \frac{\Pr[M(X)=A]}{\Pr[M(X^\prime)=A]}&=\prod_{i\in [|A|]}\frac{\Pr[M_i(X)=a_i|b_i]}{\Pr[M_i(X^\prime)=a_i|b_i]}\\ \nonumber
&\leq \varepsilon,
\end{align}
where $b_i$ is the ${\rm aborted}$ variable at $i$th round.
Therefore, Algorithm~\ref{alg:asvt} satisfies $(\varepsilon, p)$-ADP. 
\end{proof}
\fi

\section{Experiments}
\label{sec:exp}
The source code used in these experiments is available at 
\url{https://github.com/tkgsn/adp-algorithms}.

\subsection{Preliminaries}
First, we describe the detail of the dataset and queries.
Then, we explain the metrics to evaluate utility of our mechanisms.

\subsubsection{Dataset}
We use the real-world trajectory dataset, called Peopleflow\footnote{\url{http://pflow.csis.u-tokyo.ac.jp/}}.
For simulation, we assume that data owners are infected by some disease such as COVID-19.
A record (i.e., a trajectory) is a sequence of tuples (userID, latitude, longitude, \textit{placeID}, \textit{state}, \textit{timestamp}): \textit{placeID} represents the category of the location if the location has a category, \textit{state} indicates the movement (i.e., "STAY" or "MOVE"), and \textit{timestamp} is the time user visited the location.
There are $5{,}835$ locations on Tokyo, Japan in the Peopleflow dataset, and we assume that if a tuple includes a \textit{placeID} and has the "STAY" attribute, the data owner has visited the location.
We specify $(3,4,5,6,7)$ as placeID, which represents shops, restaurants, entertainments such as movie theaters and art galleries, groceries, schools, or other buildings.

\subsubsection{Query}
As described in Section~\ref{subsec:problem_setting}, we query the threshold propositions $q^i_\leq=p_\leq^t\circ f_{C^T}(\cdot)_i$ and $q^i_\geq=p_\geq^t\circ f_{C^T}(\cdot)_i$.
We use two types of counting query $f_{C^T}$: high-sensitive one and low-sensitive one according to $T$ (i.e., time or date).
Here, we explain the detail of them.
\paragraph{Low-sensitive $f_{C^T}$}
This query asks whether many people are not in close contact with each other in target locations.
That is, given target locations $(l_1,l_2,\dots,l_d)$ and time $T$, we let $C^T=(c^T_1,c^T_2,\dots,c^T_d)$, where $c^T_i$ is the condition that asks whether the user visited the location $i$ at \textbf{time} $T$ (i.e., time range is very small).
In this case, $f_{C^T}$ is low-sensitive because each user is able to be at only one location at the time $T$. 
That is, for two datasets $X,X^\prime$ such that $X\sim X^\prime$, $|f_{C^T}(X)-f_{C^T}(X^\prime)|\leq 1$.
\paragraph{High-sensitive $f_{C^T}$}
This query asks whether the target locations are not visited by many infected people.
That is, given target locations $(l_1,l_2,\dots,l_d)$ and given date $T$, we let $C^T=(c^T_1,c^T_2,\dots,c^T_d)$, where $c^T_i$ is condition that asks whether the user visited the location $l_i$ at \textbf{date} $T$ (i.e., time range is large).
In this case, $f_{C^T}$ is high-sensitive because each user can visit multiple locations at the date $T$.
That is, for two datasets $X,X^\prime$ such that $X\sim X^\prime$, $|f_{C^T}(X)-f_{C^T}(X^\prime)|\leq |C^T|$.

\subsubsection{Policy}
Given $C^T$, we use the following policy:
$
p(x):=\neg c^T_1(x) \lor \neg c^T_1(x)\dots \lor \neg c^T_d(x).
$
Since $\Delta_p(f_{C^T})$ is monotonically decreasing, ALap and SALap achieve one-sided error from Corollary~\ref{col:utility one-sided alap} for $q_\leq^i$ for all $i\in [d]$.
We assume that the information that a user did not visit target locations is not non-sensitive.
Under this assumption, this policy is reasonable from the privacy guarantee of $(\varepsilon,p)$-ADP as described in Section~\ref{subsec:privacy guarantee}.

\subsubsection{Evaluation}
Given $(p^t_\leq\circ f_{C^T}(\cdot)_i, \alpha,\beta)$-one-sided mechanism $M$,
We evaluate two types of utility: one-sided error ($1-\beta$) and the number of locations a mechanism can answer.
First, we define expected $\beta$ as follows: given $X\in\mathcal{X}^n$,
$$
\mathbb{E}[\beta|\leq]:= \mathbb{E}_{z\sim M(X), i\in\{i|p^t_\leq\circ f_{C^T}(\cdot)_i(X)\}}[\Pr[p_i^{\rm san}(z)={\rm True}]],
$$
where $p^{\rm san}_i$ is a $(p^t_\leq\circ f_{C^T}(\cdot)_i, \alpha,\beta)$-sanitizer (refer to Section \ref{sec:one-sided accuracy}).
For example, if $M={\rm ALap}_{p,\varepsilon,f_{C^T}}$, $p_i^{\rm san}(z):=p^t_\leq(z_i)$.
That is, $\mathbb{E}[\beta|\leq]$ is the expected probability that a mechanism rightly answers $q^i_\leq$ under the condition that $i$ is randomly chosen from $[d]$ such that $p^t_\leq\circ f_{C^T}(\cdot)_i(X)={\rm True}$ (i.e., condition that a target location is randomly chosen from safe locations).

\paragraph*{Competitors}
We compare our mechanism with a virtual mechanism that is $(\varepsilon,\delta)$-DP and optimal with respect to Inequality~(\ref{ineq:approximate dp condition}).
Blowfish privacy and OSDP are not subjects because undesirable privacy leaks occur as described in Sections~\ref{sec:insufficiency_blowfish} and \ref{sec:insufficiency_osdp}.

\subsection{$q_\leq$: Low-sensitive Case}
\label{sec:low-sensitive case}
Here, we consider the low sensitive query by setting $T$ on an hourly basis.

\subsubsection{Visualization}
Here, we set $T={\rm Dec}/22{\rm th}/2013$ $6$ p.m. and $t=5$, and we visualize the result on Figure~\ref{fig:result_example} in Introduction.
Location $l_i$ is marked as safe or dangerous if it is marked with True by a $(p^t_\leq\circ f_{C^T}(\cdot)_i,\alpha,\beta)$-one-sided mechanism or a $(p^t_\geq\circ f_{C^T}(\cdot)_i,\alpha,\beta)$-one-sided mechanism, respectively.
Obscure spots represent locations where the number of visited people is obscure because of noise (no guarantee of one-side error).
We can see that the difference is remarkable when $\alpha=0$.
$1$-DP does not allow one-sided error, so all outputs are obscure.
$(1,10^{-4})$-DP allows one-sided error, but almost all locations are obscure.
On the other hand, our mechanism can find $5{,}031$ safe locations out of $5{,}343$ locations.
This is due to the relaxation by $p$.

Next, we set $\alpha=10^{-3}$ and visualize the results on Figure~\ref{fig:hotspots_alpha_10^-4} to show that $(\varepsilon,\delta)$-DP still miss to find many safe and dangerous locations.
We can see that $1$-DP and $(1,10^{-4})$-DP generate almost the same results.
That is, relaxation by $\delta$ does not effectively work to achieve one-sided error.
Note that the result of our mechanism for dangerous locations are almost the same as other results because of privacy protection as shown in Section~\ref{sec:one-sided error for q_geq}.

\begin{figure*}[t]
 \begin{minipage}{0.24\hsize}
  \centering\includegraphics[width=\hsize]{new_imgs2/ground_truth.png}
 \end{minipage}
 \begin{minipage}{0.24\hsize}
  \centering\includegraphics[width=\hsize]{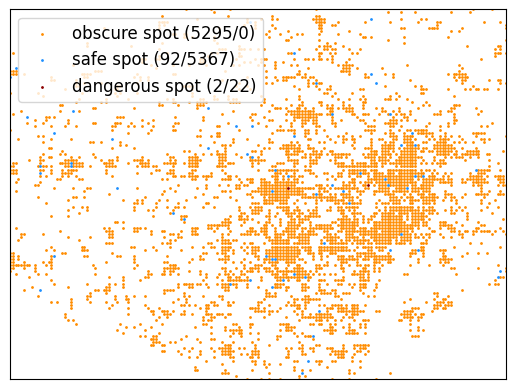}
 \end{minipage}
  \begin{minipage}{0.24\hsize}
  \centering\includegraphics[width=\hsize]{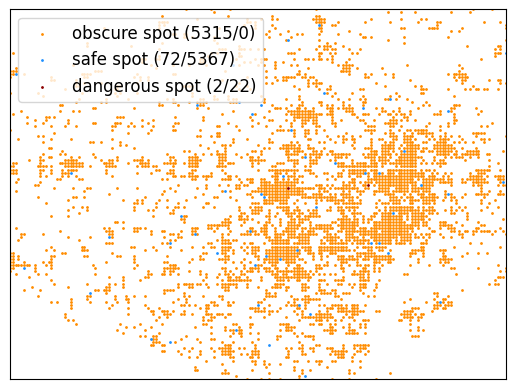}
 \end{minipage}
  \begin{minipage}{0.24\hsize}
  \centering\includegraphics[width=\hsize]{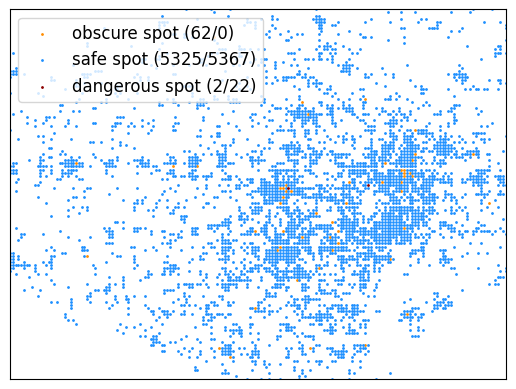}
 \end{minipage}
 \caption{Visualization when $T={\rm Dec}/22{\rm th}/2013$ $6$ p.m., $\varepsilon=1, \alpha=10^{-4}, \delta=10^{-4}$, and $t=5$ of ground truth, $\varepsilon$-DP, $(\varepsilon,\delta)$-DP, $(\varepsilon,p)$-ADP from left to right.  \label{fig:hotspots_alpha_10^-4}}
\end{figure*}

\subsubsection{Varying parameters}
Here, we set $\varepsilon=1, \alpha=10^{-3}, \delta=10^{-4}$, and $t=5$ as default.
We vary each parameter and show $\mathbb{E}[\beta|\leq]$ comparing with $(\varepsilon,\delta)$-DP.
We randomly select $T$ and we take the average of $\mathbb{E}[\beta|\leq]$.
We plot the results on Figure~\ref{fig:varying_parameters}.

\begin{figure*}[t]
 \begin{minipage}{0.24\hsize}
  \centering\includegraphics[width=\hsize]{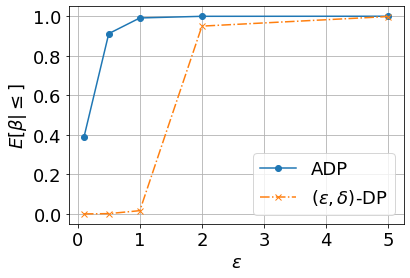}
 \end{minipage}
 \begin{minipage}{0.24\hsize}
  \centering\includegraphics[width=\hsize]{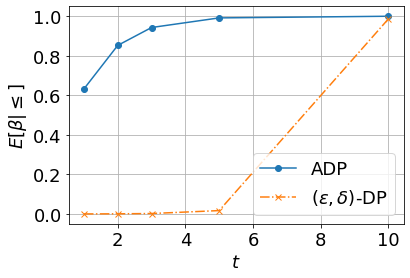}
 \end{minipage}
  \begin{minipage}{0.24\hsize}
  \centering\includegraphics[width=\hsize]{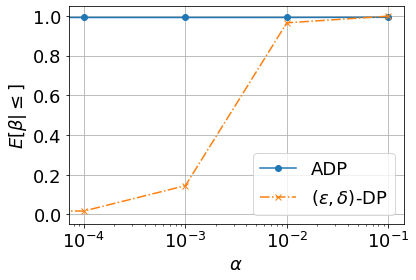}
 \end{minipage}
  \begin{minipage}{0.24\hsize}
  \centering\includegraphics[width=\hsize]{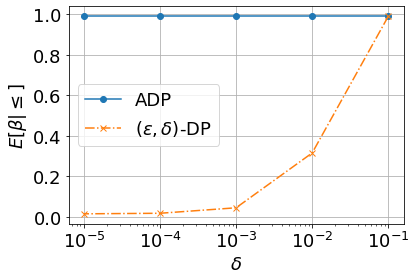}
 \end{minipage}
 \caption{$\mathbb{E}[\beta|\leq]$ when varying $\varepsilon$, $t$, $\alpha$, and $\delta$. $T$ is randomly chosen and we take the average.\label{fig:varying_parameters}}
\end{figure*}

While $(\varepsilon,\delta)$-DP sharply decreases when $\varepsilon$ and $t$ becomes smaller, we can see that ADP is robust to small $\varepsilon$ and small $t$.
For example, in the case of $t=1$ (i.e., judging whether one was not in close contact with an infected person), $(\varepsilon,\delta)$-DP does not work at all; on the other hand, our mechanism can accurately judge a not close contact with probability about $0.4$.

Next, we check how $\alpha$ or $\delta$ is required for $(\varepsilon,\delta)$-DP to achieve one-sided error.
Note that ADP is not affected by $\alpha$ and $\delta$.
When $\alpha=10^{-2}$, $(\varepsilon,\delta)$-DP becomes close to ADP, but $\alpha=10^{-2}$ is too large for our setting.
This is because if we get $5{,}000$ answers like Figure~\ref{fig:hotspots_alpha_10^-4} for example, it includes $50$ errors in average.
Also, when $\delta=10^{-1}$, $(\varepsilon,\delta)$-DP becomes close to ADP, but $\delta=10^{-1}$ is clearly unacceptable value.

\paragraph*{Remark}
It should be noted that the number of target locations (i.e., $|C^T|$) does not affect one-sided error in the low-sensitive case.
However, the qualitative privacy of the policy becomes weaker from the implication of Proposition~\ref{prop:leaked_information}.

\subsection{$q_\leq$: High-sensitive Case}
Here, we consider high sensitive query by setting $T$ on an date basis.
In this case, we use SALap with $p_i^{\rm san}(A):=(A_i=\perp)$, which is robust to the sensitivity.
Here, we evaluate the number of right answers (i.e., $|\{a\in A|a=\perp\}|$ where $A={\rm SALap(X)}$) in addition to $\mathbb{E}[\beta|\leq]$.

\subsubsection{Visualization}
First, we visualize how many locations SALap can rightly answer in Figure~\ref{img:nquery_cumulative} as cumulative probabilities.
We set $T={\rm Dec}/22{\rm th}/2013$, $\alpha=0$, $t=5$, and $\varepsilon=1$.
While ALap answers only one location, for example, SALap can get at least $10$ right answers (i.e., $l_1,l_2,\dots,l_{10}$) with about probability of about $0.5$.
This difference comes from the composed technique of SALap.

\subsubsection{Varying parameters}
We set $\alpha=10^{-3}, \epsilon=1, \delta=10^{-4}$, and $t=5$ as default.
Note that we omit the case varying $\delta$ because $(\varepsilon,\delta)$-DP does not work well even low-dimensional case.
We can see that $\mathrm{E}[\beta|\leq]$ is almost the same as the low-dimensional case.
However, unlike the low-dimensional case, the number of answered queries greatly affects $\varepsilon$ and $t$; when $\varepsilon<1$ or $t< 3$, the number of right answers sharply decreases.
In the high-dimensional case, we need to spend great privacy budget or run SALap with a large threshold setting to target many locations.

\begin{figure*}[t]
 \begin{minipage}{0.24\hsize}
  \centering\includegraphics[width=0.9\hsize]{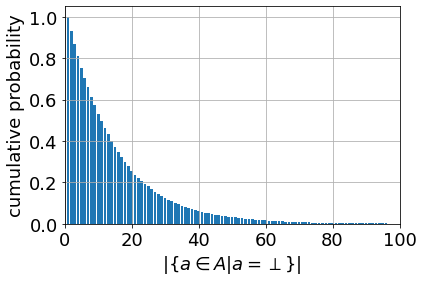}
  \caption{Visualization of $|\{a\in A|a=\perp\}|$ when $\alpha=0$, $t=5$, $\varepsilon=1$, and $T={\rm Dec}/22{\rm th}/2013$. \label{img:nquery_cumulative}}
 \end{minipage}
 \begin{minipage}{0.75\hsize}
 \begin{minipage}{0.32\hsize}
  \centering\includegraphics[width=\hsize]{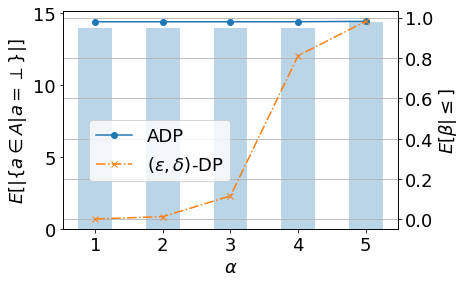}
 \end{minipage}
  \begin{minipage}{0.32\hsize}
  \centering\includegraphics[width=\hsize]{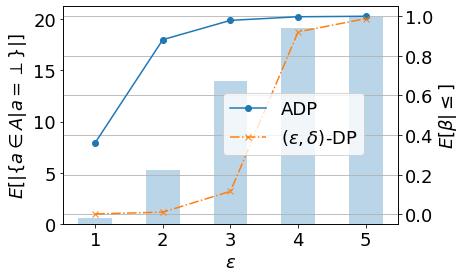}
 \end{minipage}
  \begin{minipage}{0.32\hsize}
  \centering\includegraphics[width=\hsize]{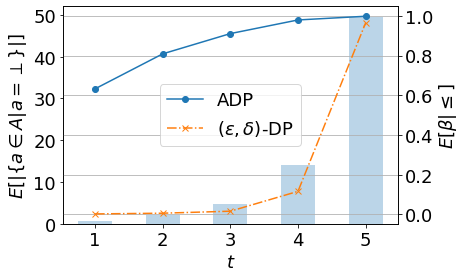}
 \end{minipage}
 \caption{$\mathbb{E}[\beta|\leq]$ (line graph) and $\mathbb{E}_A[\{a\in A|a=\perp\}]$ (bar graph) when varying parameters of $\alpha$, $\varepsilon$, and $t$. $T$ is randomly chosen and we take the advantage of results.}
 \end{minipage}
\end{figure*}

\section{Conclusion}
In this paper, we proposed a new privacy definition called ADP by relaxing DP with the belief to achieve one-sided error.
We showed that the privacy guarantee of ADP in our case is reasonable based on Bayesian analysis.
We also proposed two ADP mechanisms called ALap and SALap for low-sensitive and high-sensitive cases.
Finally, we conducted the experiments and show the practicality for epidemic surveillance using real-world dataset.

We found some limitations of ADP.
First, ADP results in unreasonable privacy protection to answer $q^i_\geq$ with one-sided error.
Second, when the sensitivity is high and $t$ and $\varepsilon$ are small, it is difficult to answer many queries with practical one-sided error.
Especially for dense data (many non-zero values) decreases number of answers of SALap.

This paper focuses on location monitoring, but one-sided error is crucial for other subjects (e.g., outlier detection and medical care).
Also, ADP is the general notion of privacy, so we believe that ADP effectively works more general purposes other than one-sided error.
These are interesting future directions.



\bibliographystyle{IEEEtran}
\bibliography{IEEEexample}

\iflong
\section{Appendices}
\label{sec:geometric_mechanism}
\subsection{The Geometric Mechanism}
\label{subsec:geometric_mechanism}
The geometric mechanism~\cite{ghosh2012universally} is a discrete mechanism for counting query $f:\mathcal{X}^n\to\{0,1,\dots,n\}^d$.
The mechanism adds noise like the Laplace mechanism, but $\lambda_i$ is i.i.d., and following $\frac{1-\mathrm{e}^{-\varepsilon}}{1+\mathrm{e}^{-\varepsilon}}\exp{(\frac{-\varepsilon |\lambda|}{\Delta_f})}$ for all $i\in[d]$.
The geometric mechanism satisfies $\varepsilon$-DP.

\subsection{Asymmetric geometric mechanism}
\begin{definition}[Asymmetric Geometric mechanism]
Given policy $p$, privacy parameter $\varepsilon$, and counting query $f:\mathcal{X}^n\to\{0,1,\dots,n\}^d$ whose $p$-sensitivity is $1$, the asymmetric geoemtric mechanism $AGeo_{p,\varepsilon, f}$ is:
$$AGeo_{p,\varepsilon, f}(X)=(\lambda_1,\dots,\lambda_d)$$
where $\lambda_i$ is independently distributed, and is following the discrete distribution for all $i\in[d]$:\\
If $p$-sensitivity of $f(\cdot)_i$ is monotonically decreasing,
\begin{align}
\nonumber \left\{ \begin{array}{ll}
\mathrm{e}^{-(\lambda-f(X))\varepsilon} & (\lambda=n) \\
(1-\mathrm{e}^{-\varepsilon})\mathrm{e}^{-(\lambda-f(X))\varepsilon}& (\lambda\geq f(X)) \\
0 & (otherwise) \\
\end{array} \right.
\end{align}
else if $P$-sensitivity of $f(\cdot)_i$ is monotonically increasing,
\begin{align}
\left\{ \begin{array}{ll}
\nonumber \mathrm{e}^{(\lambda-f(X))\varepsilon} & (\lambda=0) \\
(1-\mathrm{e}^{-\varepsilon})\mathrm{e}^{(\lambda-f(X))\varepsilon}& (\lambda\leq f(X)) \\
0 & (otherwise) \\
\end{array} \right.
\end{align}
otherwise
$$\frac{1-\mathrm{e}^{-\varepsilon}}{1+\mathrm{e}^{-\varepsilon}}\mathrm{e}^{-\varepsilon |\lambda-f(X)|}$$
\end{definition}

\begin{theorem} 
\label{theo:osdp geo privacy}
$\forall p, \varepsilon\in\mathbb{R}^+, f:\mathcal{X}^n\to\{0,1,\dots,n\}^d$, AGeo$_{p,\varepsilon,f}$ satisfies ($\varepsilon,p$)-ADP\@.
\end{theorem}
The proof of this is analogous of that of Theorem~\ref{theo:adp_laplace_privacy}.

\paragraph{Optimality}
Formally, the optimality of AGeo is as follows:
\begin{theorem}
\label{theo:optimality generalized geo}
Given policy $p$ and one-dimensional counting query $f:\mathcal{X}^n\to\{0,1,\dots,n\}$ whose $p$-sensitivity is $1$ and monotonic, the AGeo has the lowest error.
Here, error is:
$$\mathbb{E}_{X\sim\pi}[\mathbb{E}_{z_X}[loss(z_X, f(X))]]$$
where $\pi$ is a prior distribution of $X$ and $z_X$ is a random variable output from $m(X)$, and loss$(z_X, f(X))$ is an arbitrary measure subject to being non-negative and non-decreasing in $|m(X)-f(X)|$ (e.g., mean absolute error).
\end{theorem}

\label{sec:optimality_of_geometric_mechanism}
\begin{proof}
Here, we prove the theorem in the case where $p$-sensitivity is monotonically decreasing.
When $p$-sensitivity is monotonically increasing, the proof is analogous.

We consider the distribution of the mechanism as $(n+1)\times (n+1)$ matrix $M$, which we call a distribution matrix.
Here, $M_{i,j}$ represents the probability outputting $j$ when the answer of the counting query is $i$.
Then, the following lemmas hold.
\begin{lemma}
\label{lemma:c_upper_triangle}
The distribution matrix of the mechanism optimal for mean absolute error is an upper triangle matrix.
\end{lemma}

\begin{proof}[Proof of Lemma~\ref{lemma:c_upper_triangle}]
We assume that there is an optimal mechanism such that the distribution matrix $M$ is not an upper triangle matrix, and derive a contradiction.

We assume $M_{i,j}\neq 0$ where $i>j$ and $i>0$.
Since the mechanism satisfies $(\varepsilon,p)$-ADP, $M_{i-1,j}\geq \mathrm{e}^{-\varepsilon}M_{i, j}>0$.
Iteratively applying this statement, we get $M_{j+1, j}>0$.
By moving $M_{j+1, j}$ to $M_{j+1, j+1}$ (i.e., setting $M_{j+1, j+1}\Leftarrow M_{j+1,j+1}+M_{j+1, j}$ and $M_{j+1, j}\Leftarrow 0$), the updated mechanism has a lower error than the original mechanism.
However, this mechanism does not satisfy $(\varepsilon,p)$-ADP, so we move $M_{j+k, j}$ to $M_{j+k, j+1}$ for all $k\in[2,i-j]$.
This updated mechanism satisfies $(\varepsilon,p)$-ADP and has a lower error than the original mechanism.
This contradicts the fact that the original mechanism is optimal.
\end{proof}

\begin{lemma}
\label{lemma:constraint_in_upper_triangle}
The distribution matrix $M$ of the mechanism optimal for absolute mean error satisfies $\forall i,j\in[n]$ such that $i<j$,
\begin{equation}
\label{eq:relationship_in_c}
M_{i,j}=\mathrm{e}^{-\varepsilon}M_{i+1, j}
\end{equation}
\end{lemma}

\begin{proof}[Proof of Lemma~\ref{lemma:constraint_in_upper_triangle}]
Also, we derive a contradiction by assuming the optimal mechanism whose distribution matrix does not satisfy Equality \ref{eq:relationship_in_c}.
From the constraint of ADP, 
$$M_{i,j}>\mathrm{e}^{-\varepsilon}M_{i+1, j}$$
Putting $D = M_{i,j}-\mathrm{e}^{-\varepsilon}M_{i+1, j}$, we set $M_{i,i}\Leftarrow M_{i,i}+D$.
Now, $D>0$ from the assumption, so the updated mechanism has a lower error.
The constraint of $M_{i,i}$ is $M_{i,i}\geq 0$ since $M_{i+1, i}=0$ from Lemma~\ref{lemma:c_upper_triangle}.
Therefore, the updated mechanism satisfies $(\varepsilon,P)$-ADP.
This is the contradiction to the fact that the original mechanism is optimal.
\end{proof}

Now, we can prove Theorem~\ref{theo:optimality generalized geo} using Lemma~\ref{lemma:c_upper_triangle} and \ref{lemma:constraint_in_upper_triangle}.
From Lemma~\ref{lemma:c_upper_triangle}, $M_{n+1,n+1}=1$ since $M_{n+1,\cdot}$ is a probability distribution.
Then, by inductively applying Lemma~\ref{lemma:constraint_in_upper_triangle} from $M_{n+1,n+1}=1$, we get the distribution matrix of the ADP mechanism.
\end{proof}

\subsection{The composition theorem}
\label{sec:composition}
Here, we derive the composition theorem for ADP.

\begin{theorem}
\label{theo:generalized OSDP composition}
We consider the sequential mechanism $M_i:\mathcal{Z}^{(1)}\times\dots\times\mathcal{Z}^{(i-1)}\times\mathcal{X}^n\to\mathcal{Z}_i$ for $i\in[k]$ where $\mathcal{Z}_i$ is the range space of $M_i$.
We assume that for all $i\in[k]$, $M_i(z_{1:i-1},\cdot)$ satisfies $(\varepsilon_i,p_i)$-ADP for any value of auxiliary input $z_{1:i-1}$.
Then, mechanism $M:\mathcal{X}^n\to\mathcal{Z}_1\times\dots\times\mathcal{Z}_k$ sequentially applying $M_i$ satisfies $(\sum_{i=1}^k\varepsilon_i, \sum p_i)$-ADP\@.
\end{theorem}
\begin{proof}
From the definition of ADP, it holds that $\forall i\in[k], S_i\subseteq \mathcal{Z}^{(i)},X\sim^{\sum_i p_i}X^\prime,$
\begin{equation}
\nonumber
    \Pr[M_i(z_{1:i-1},X)\in S_i]\leq\mathrm{e}^{\varepsilon_i}\Pr[M_i(z_{1:i-1},X^\prime)\in S_i]
\end{equation}
Therefore, $\forall S\subseteq \mathcal{Z}_1\times\dots\times\mathcal{Z}_k$, the following inequality holds.
\begin{align}
\nonumber
    \frac{\Pr[M(X)\in S]}{\Pr[M(X^\prime)\in S]}&=\frac{\Pi_{i=1}^k\Pr[M_i(z_{1:i-1},X)\in S_i]}{\Pi_{i=1}^k\Pr[M_i(z_{1:i-1},X^\prime)\in S_i]}\\
    \nonumber
    &\leq\mathrm{e}^{\sum_{i=1}^k{\varepsilon_i}}.
\end{align}

\end{proof}

This composition theorem says that not only the privacy parameter sequentially adds up as DP, but also the composition of two mechanisms with different policies results in a mechanism with a policy constructed by adding the two policies.
\fi

\end{document}